\documentclass[aps,pra,twocolumn,showpacs,showkeys,superscriptaddress,nobalancelastpage,nofootinbib,floatfix,longbibliography]{revtex4-2}

\usepackage{amsmath,amssymb,amsthm}
\usepackage{graphicx}
\usepackage{physics}
\usepackage{enumitem}
\usepackage{xspace}

\usepackage{empheq} % Add this to your preamble

\usepackage{orcidlink}

\usepackage{url}
\usepackage{hyperref}
\usepackage{color}
\definecolor{refcolor}{RGB}{160,35,0}
\definecolor{hrefcolor}{RGB}{0,35,190}
\hypersetup{
    colorlinks,
    citecolor=refcolor,
    filecolor=refcolor,
    linkcolor=hrefcolor,
    urlcolor=hrefcolor
}

\def\({\left(}
\def\){\right)}
\def\[{\left[}
\def\]{\right]}

\newcommand{\hilbert}{\mathcal{H}}
\newcommand{\pobs}[1]{#1}
\newcommand{\obs}[1]{\mathsf{\pobs{#1}}}

\newcommand{\mc}[1]{\mathcal{#1}}

\newcommand{\R}{\mathbb{R}}

\newcommand{\WdWSpaceL}{{\mkern-8mu}}
\newcommand{\WdWSpaceM}{{\mkern-6mu}}
\newcommand{\WdWSpaceMM}{{\mkern-9mu}}
\newcommand{\WdWSpaceR}{{\mkern-8mu}}
\newcommand{\braa}[1]{\bra{\WdWSpaceL\bra{#1}\WdWSpaceM}}
\newcommand{\kett}[1]{\ket{\WdWSpaceM\ket{#1}\WdWSpaceR}}

\newcommand{\brakett}[2]{\left.\bra{#1}{\WdWSpaceMM}\ket{#2}{\WdWSpaceR}\right\rangle}
\newcommand{\evtt}[2]{\braa{#2}{\mkern-3mu}{#1}{\mkern-4mu}\kett{#2}}

\newcommand{\q}{\boldsymbol{q}}
\newcommand{\M}{\mathbf{M}}

\newcommand{\vs}{\textit{vs.}\xspace}

\newtheorem{definition}{Definition}
\newtheorem{proposition}{Proposition}
\newtheorem{lemma}{Lemma}

\newtheorem{principle}{Principle}

\newtheorem{implication}{Implication}

\theoremstyle{remark}

\newtheoremstyle{nospace} % name of the style
  {3pt}                   % Space above
  {3pt}                   % Space below
  {\itshape}              % Body font
  {}                      % Indent amount
  {\bfseries}             % Theorem head font
  {.}                     % Punctuation after theorem head
  {.5em}                  % Space after theorem head
  {\thmname{#1}\thmnumber{#2}\thmnote{ (#3)}} % <--- HEADER SPECIFICATION

\theoremstyle{nospace}

\newtheorem{defCustom}{}
\makeatletter
\newcommand{\setdefCustomtag}[1]{% \setdefCustomtag{<tag>}
  \let\oldthedefCustom\thedefCustom% Store \thedefCustom
  \renewcommand{\thedefCustom}{{\normalfont\textbf{#1}}}% Redefine it to a fixed value
  \g@addto@macro\enddefCustom{% At \end{defCustom}, ...
    \global\let\thedefCustom\oldthedefCustom}% ...restore \thetheorem
  }
\makeatother

\allowdisplaybreaks

\begin{document}

\title{Exact quantum time compatible with positive energy}

\author{\orcidlink{0000-0002-2765-1562}\ O.C. Stoica}
%\email[]{cristi.stoica@theory.nipne.ro,holotronix@gmail.com}
\affiliation{Dept. Th. Physics, NIPNE-HH, Bucharest, Romania. \href{mailto:cristi.stoica@theory.nipne.ro}{cristi.stoica@theory.nipne.ro},  \href{mailto:holotronix@gmail.com}{holotronix@gmail.com}}

\keywords{Quantum time; time operator; Wheeler-DeWitt equation; Hamiltonian bounded from below; Unruh-Wald theorem; Hegerfeldt-Ruijsenaars theorem.}

\begin{abstract}
What would it be like to be in a superposition of yesterday, today, and tomorrow? This question may seem at best entertaining, but it is necessary, and exploring it allows us to understand how exact irreversible clocks and change are possible, despite the Unruh-Wald and Hegerfeldt-Ruijsenaars no-go theorems forbidding them.

Unruh and Wald (1989) proved that if energy is bounded from below, no observable can increase monotonically with the Schr\"odinger time parameter $t$. Perfectly monotonic clocks and irreversible observable changes (Hegerfeldt-Ruijsenaars, 1980) seem impossible. From the perspective of the Schr\"odinger time, the world appears in a superposition of different intrinsic clock states indicating different times and opposite time directions. This seems to directly contradict our daily experiences of time and change.

I show that there is no contradiction: from an intrinsic perspective of the world, sharp irreversible changes do happen, because the macroscopic pointer states resolve the superposition of different times. Large-scale time-reversing or discontinuous transitions are not internally observable in the records. From the intrinsic perspective, an unbounded intrinsic-time translation generator plays the role of the Hamiltonian, generating only forward time evolution with respect to the intrinsic time, but not to the Schr\"odinger parameter $t$, which is thus not justified to play the role of time. This allows sharp time observables even if the external Hamiltonian is bounded from below. In addition, this leads to a stationary wavefunction of the universe satisfying a Wheeler-DeWitt-type equation, without assuming gravity.
\end{abstract}

\maketitle

\begin{quote}
\emph{Or how could it come into being? If it came into being, it is not; nor is it if it is going to be in the future. Thus is becoming extinguished and passing away not to be heard of nor is it divisible, since it is all alike, and there is no more of it in one place than in another, to hinder it from holding together, nor less of it, but everything is full of what is.}
\end{quote}
\begin{flushright}
Parmenides, \emph{Fragments}.
\end{flushright}

%------------------------------------------------------------%
\section{Introduction}
\label{s:intro}

In ordinary Schr\"odinger quantum mechanics, time is not an observable like position, momentum, or spin, it is a parameter with respect to which the state evolves:
\begin{equation}
\label{eq:time-evol}
\ket{\psi(t)}=e^{-\frac{i}{\hbar}\obs{H}t}\ket{\psi(0)}.
\end{equation}

The state vector $\ket{\psi(t)}$ depends on time, but $t$ is not represented by a self-adjoint operator on the Hilbert space of the system.
The hope to represent it by a self-adjoint operator was deflated by Pauli's theorem~\cite{Pauli1980GeneralPrinciplesOfQuantumMechanics}, showing that if there were a self-adjoint operator $\obs{T}$ satisfying
\begin{equation}
\label{eq:T-H-conjugate}
[\obs{T},\obs{H}]=i\hbar,
\end{equation}
then the Hamiltonian $H$ could not be bounded below.

In a seminal article containing several other important results~\cite{UnruhWald1989TimeAndTheInterpretationOfCanonicalQuantumGravity}, Unruh and Wald strengthened Pauli's result to any possible observable $\obs{T}$, not assumed to be canonically conjugate with the Hamiltonian.
They only assumed that Hamiltonian is bounded from below,
\begin{equation}
\label{eq:H-bounded-below}
\obs{H}\geq \mathfrak{c}
\end{equation}
for some $\mathfrak{c}\in\R$, and that the time observable $\obs{T}$ correlates monotonically with the Schr\"odinger parameter $t$.
Consider two clock states $\ket{\tau_n}$ and $\ket{\tau_m}$ with $m>n$, assuming that $\ket{\tau_n}$ precedes $\ket{\tau_m}$. Unruh and Wald explored what happens if once the system's state becomes $\ket{\tau_m}$, it never returns to $\ket{\tau_n}$ according to the Schr\"odinger time. This condition, called the \emph{Heraclitean property} of time, is that the amplitude for a clock to go backward from the later reading $\tau_m$ to the earlier reading $\tau_n$ after a positive Schr\"odinger time $t$
\begin{equation}
f(t)=\bra{\tau_n}e^{-i \obs{H} t/\hbar}\ket{\tau_m}
\end{equation}
should vanish for $t>0$.
Unruh and Wald showed that the conditions $f(t)=0$ for $t>0$ and $\obs{H}\geq \mathfrak{c}$ together imply that $f(t)=0$ for all real $t$.

The meaning of this result cannot be overstated. If there were a time observable $\obs{T}$ able to indicate that time passed, it is reasonable to ask that it does not go back to indicating a past time. But this is impossible if the Hamiltonian is bounded from below.

Their result is perfectly rigorous, and can be easily derived from a Lemma proved earlier by Hegerfeldt and Ruijsenaars~(\cite{HegerfeldtRuijsenaars1980RemarksOnCausalityLocalizationAndSpreadingOfWavePackets}, p. 378):
\begin{lemma}[Hegerfeldt \& Ruijsenaars]
\label{lemma:HegerfeldtRuijsenaars}
Let $\obs{H}$ be a self-adjoint operator on a Hilbert space $\hilbert$, $\hilbert'$ a closed subspace of $\hilbert$, and $\ket{\psi_0}\in\hilbert$. If $\obs{H}$ is bounded from below and $e^{-\frac{i}{\hbar}\obs{H}t}\ket{\psi_0}\in\hilbert'$ for all values of $t$ in an open interval $(t_1,t_2)$, then $e^{-\frac{i}{\hbar}\obs{H}t}\ket{\psi_0}\in\hilbert'$ for all $t\in\R$.
\qed
\end{lemma}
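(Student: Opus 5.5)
The plan is to reduce the statement to the vanishing of scalar functions and then use analyticity coming from the lower bound on $\obs{H}$. Let $\obs{P}$ be the orthogonal projector onto $\hilbert'^{\perp}$. The hypothesis says $\obs{P}e^{-\frac{i}{\hbar}\obs{H}t}\ket{\psi_0}=0$ for $t\in(t_1,t_2)$, and we must show this holds for all real $t$. It suffices to show that for every $\ket{\phi}\in\hilbert'^{\perp}$ the scalar function
\begin{equation*}
g_\phi(t)=\bra{\phi}e^{-\frac{i}{\hbar}\obs{H}t}\ket{\psi_0}
\end{equation*}
vanishes identically. Indeed, $\hilbert'$ is closed, so $\hilbert'=(\hilbert'^{\perp})^{\perp}$.

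\textbf{Step 1 (analytic continuation).} Use the spectral theorem to write $g_\phi(t)=\int_{[\mathfrak{c},\infty)} e^{-\frac{i}{\hbar}Et}\,d\mu_{\phi,\psi_0}(E)$, where $\mu_{\phi,\psi_0}$ is the complex spectral measure, of finite total variation at most $\|\phi\|\|\psi_0\|$, supported in $[\mathfrak{c},\infty)$ because $\obs{H}\geq\mathfrak{c}$. Now replace $t$ by $z=t-is$ with $s>0$. The integrand becomes $e^{-\frac{i}{\hbar}Et}e^{-\frac{1}{\hbar}Es}$. On the support, $E\geq\mathfrak{c}$, so it is bounded by $e^{-\mathfrak{c}s/\hbar}$. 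Hence
\begin{equation*}
G_\phi(z)=\int e^{-\frac{i}{\hbar}Ez}\,d\mu_{\phi,\psi_0}(E)
\end{equation*}
is well defined on the closed lower half-plane $\operatorname{Im}z\leq 0$. It is holomorphic in the open lower half-plane: differentiate under the integral, using that $E e^{-Es/\hbar}$ is bounded on $[\mathfrak{c},\infty)$ for $s$ bounded away from $0$, or alternatively use Morera together with dominated convergence. By dominated convergence, $G_\phi$ is also continuous up to the real axis and agrees there with $g_\phi$. This is the only place the lower bound is used.

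\textbf{Step 2 (vanishing on a segment forces vanishing everywhere).} $G_\phi$ is holomorphic in the lower half-plane, continuous up to the boundary, and its boundary values vanish on the open interval $(t_1,t_2)$. By the Schwarz reflection principle, extend $G_\phi$ across $(t_1,t_2)$ by $G_\phi(\bar z)=0$ in the upper half-plane. More precisely, define $F=G_\phi$ below the axis and $F=\overline{G_\phi(\bar z)}$ above, which is $0$ on the segment. This gives a function holomorphic on a neighborhood of $(t_1,t_2)$ that vanishes on a set with accumulation points. Therefore it vanishes on that neighborhood, so $G_\phi\equiv 0$ on the connected open lower half-plane by the identity theorem. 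By boundary continuity, $g_\phi(t)=0$ for all real $t$. Since $\phi\in\hilbert'^{\perp}$ was arbitrary, $e^{-\frac{i}{\hbar}\obs{H}t}\ket{\psi_0}\in\hilbert'$ for all $t$.

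The main obstacle I expect is Step 2, specifically justifying the reflection principle with only continuous, not smooth, boundary values. The standard version requires continuity up to the real segment, and Step 1 supplies exactly that through dominated convergence. So I would verify carefully that $G_\phi(t-is)\to g_\phi(t)$ locally uniformly as $s\to0^+$, which also follows from dominated convergence because $\mu_{\phi,\psi_0}$ is finite. As a fallback, I would note that a bounded holomorphic function on the half-plane whose boundary values vanish on a set of positive measure is identically zero (Privalov/F.~and M.~Riesz uniqueness). This gives the same conclusion without reflection.
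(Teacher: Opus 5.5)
Your proof is correct and is essentially the argument the paper points to. The paper gives no proof of its own but cites Hegerfeldt--Ruijsenaars and Streater--Wightman for analytically continuing $\bra{\phi}e^{-\frac{i}{\hbar}\obs{H}t}\ket{\psi_0}$, with $\ket{\phi}\in(\hilbert')^{\perp}$, into a half-plane, which is exactly your continuation plus reflection and identity-theorem argument; note that your lower half-plane is the correct one for the $e^{-\frac{i}{\hbar}\obs{H}t}$ convention with $\obs{H}\geq\mathfrak{c}$, where the paper loosely says ``upper.''
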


This Lemma is often stated as \emph{``nothing can happen for the first time''}. Mathematically, both results can be proved by using the analytic extension of the function $f(t)$ in the upper complex plane, for example~\cite{StreaterWightman1964PCTSpinStatisticsAndAllThat}. It was used by Hegerfeldt in~\cite{Hegerfeldt1974RemarkOnCausalityAndParticleLocalization}, another gem that deserves more attention.
The Unruh-Wald theorem can be proved easily by applying the Lemma~\ref{lemma:HegerfeldtRuijsenaars} to $\hilbert'=\ket{\tau_n}^\perp:=\{\ket{v}\in\hilbert|\braket{v}{\tau_n}=0\}$ and $(t_1,t_2)=(0,\infty)$.

Moreover, while Unruh and Wald did not assume condition~\eqref{eq:T-H-conjugate}, the condition that no (generalized) eigenstate $\ket{\tau}$ of $\obs{T}$ can evolve to a past eigenstate of $\obs{T}$ can be satisfied only if the Hamiltonian satisfies condition~\eqref{eq:T-H-conjugate}~\cite{Stoica2022ProblemOfIrreversibleChangeInQuantumMechanics,Stoica2024EmpiricalAdequacyOfTimeOperatorCC2HamiltonianGeneratingTranslations}.
In this case, the Hamiltonian has, in some parametrization $\tau\in\R$, the form
\begin{equation}
\label{eq:clock-H-irrev}
\obs{H}=-i\hbar\pdv{\tau}.
\end{equation}

This seems too restrictive, but in fact there are many plausible systems whose Hamiltonian can be expressed like this, for example systems that contain an ideal clock~\cite{Page1986DensityMatrixOfTheUniverse,Stoica2026TheClockAmbiguityProblemExtendedOrExtinguished}, systems that contain an ideal measurement~\cite{BuschGrabowskiLahti1995OperationalQuantumPhysics,Mittelstaedt2004InterpretationOfQMAndMeasurementProcess}, and Koopman-von Neumann quantum representations of irreversible classical systems~\cite{Koopman1931HamiltonianSystemsAndTransformationInHilbertSpace,vonNeumann1932KoopmanMethod}, as seen in~\cite{Stoica2022ProblemOfIrreversibleChangeInQuantumMechanics,Stoica2024EmpiricalAdequacyOfTimeOperatorCC2HamiltonianGeneratingTranslations}.

To avoid the problems raised by the Pauli and Unruh-Wald theorems, a usual strategy is to consider unsharp clock observables (positive operator-valued measure--POVM)~\cite{Busch2002Time-energy-uncertainty-relation,LoveridgeMiyadera2019RelativeQuantumTime,HohnSmithLock2021TrinityOfRelationalQuantumDynamics,GemsheimRost2023EmergenceOfTimeFromQuantumInteractionWithTheEnvironment}.
This may seem the only option left by the Unruh-Wald theorem, so maybe we are doomed to never afford a good enough clock.
But the problem does not just go away by invoking unsharp observables, because this allows a nonzero amplitude for the system to transition to a past state or jump to a distant future state.

Moreover, the problem is much broader, because the Hegerfeldt-Ruijsenaars shows more than this, it shows that no definite observable change is possible: if a property has some value for a while (enough to be observed), it will have the same value all the time.
If its value is bounded for a while in some interval, it will be bounded to that interval forever. If the observable property changes its value continuously, its value is bounded to any interval no matter small, so it can never change.

In particular, we would not be able to prepare or measure states with perfect accuracy, so we would not be able to accurately exchange information~\cite{Stoica2025CanWeAccuratelyReadWriteQuantumData}.
Admitting that everything we know about the world is simply unsharp~\cite{BuschJaeger2010UnsharpQuantumReality} means giving up any hope of accurate knowledge of the world.

Here I will accept the result and embrace it, focusing on a different lesson that can be understood from the Hegerfeldt-Ruijsenaars Lemma and Unruh-Wald theorem: maybe we should give up the Schr\"odinger parameter $t$ as time in the usual sense, as that which accounts for change.
Change is recorded through observables, so temporal ordering should be reconstructed from records, and not identified \emph{a priori} with the Schr\"odinger parameter $t$, even if this leads us to abandon the Heraclitean property of time.

I will consider candidate time observables $\obs{T}$ acting on the Hilbert space $\hilbert$ of a large closed system, which could be the universe.
The observable $\obs{T}$ should not be conserved, because then its value will not change.
Since there are many other observables on $\hilbert$ that can be measured together with $\obs{T}$, its spectrum has to be highly degenerate and uniform.
\begin{definition}
\label{def:reaches}
If a state $\ket{\psi}$ is a generalized eigenstate of $\obs{T}$ with eigenvalue $\tau$, we say that $\ket{\psi}$ has definite intrinsic time $\tau$. 
If $\ket{\psi}$ has a nonzero component in any open interval of the spectrum of $\obs{T}$ containing $\tau$, we say that it \emph{reaches the intrinsic time} $\tau$.
\end{definition}

Lemma~\ref{lemma:HegerfeldtRuijsenaars} implies the following
\begin{implication}
\label{implication:UW}
If the system has a definite intrinsic time $\tau_0$ at the Schr\"odinger time $t_0$, it evolves immediately into a superposition of states with different intrinsic times.
Moreover, during any open Schr\"odinger time interval $(t_1,t_2)$, the system reaches all intrinsic times it reaches for any other Schr\"odinger time, past or future.
\end{implication}
\begin{proof}[Explanation]
From Lemma~\ref{lemma:HegerfeldtRuijsenaars}, any eigenstate of $\obs{T}$ orthogonal to the state of the system for an open Schr\"odinger time interval is orthogonal to the system's state at all times.
This implies that if the system reaches at least once (with respect to the Schr\"odinger time) an intrinsic time $\tau$, it reaches it in any open Schr\"odinger time interval, no matter how small.
\end{proof}

In Section~\ref{s:superposition-of-times} we will explore how this strange situation is perceived from within a large system that observes change, tracking it as records using the observable $\obs{T}$.
We will see that a generic ingredient from Everett's answer to the measurement problem and the problem of emergence of classicality~\cite{Everett1957RelativeStateFormulationOfQuantumMechanics,Everett1973TheTheoryOfTheUniversalWaveFunction}, at least in its modern understanding as in~\cite{Wallace2012TheEmergentMultiverseQuantumTheoryEverettInterpretation,SEP-Vaidman2021MWI}, gives an answer to this question as well, with minimal adaptations.
The present $\tau$-time as experienced by an observer is reached at all Schr\"odinger times. This uncanny situation is a direct consequence of the Lemma~\ref{lemma:HegerfeldtRuijsenaars} by Hegerfeldt-Ruijsenaars.
But, for an observer within the world, this is not a paradoxical situation, since the $\tau$-time works as one would expect from a ``legitimate'' notion of time.

In Section~\ref{s:intrinsic-time-wavefunction} we explore quantum time observables and how they are defined as collective records in pointer states. Rejecting the Schr\"odinger parameter $t$ as playing the role of time, we find that the Wheeler-DeWitt equation holds even in quantum theories that do not contain gravity implicitly. We derive the effective intrinsic Schr\"odinger-like equation and study its main properties.

%------------------------------------------------------------%
\section{Superposition of times}
\label{s:superposition-of-times}

%------------------------------------------------------------%
\subsection{What it is like to be in a superposition of yesterday, today, and tomorrow?}
\label{s:yesterday-today-tomorrow}

It is Friday. Let $\ket{\psi_{T}}$, $\ket{\psi_{F}}$, and $\ket{\psi_{S}}$ represent three quantum states of the world, respectively the world as it was yesterday this time, as it is now, and as it will be tomorrow same time.
Since for the internal observers there are clear differences between the state of the world at these three times, these three vectors span a three-dimensional vector space. Are all the other unit vectors from this space, except for these ones, forbidden by Nature? There is no reason to believe this, quite the opposite, states like
\begin{equation}
\label{eq:superposition-times}
\ket{\psi(t_{F})}=\alpha\ket{\psi_{T}}+\beta\ket{\psi_{F}}+\gamma\ket{\psi_{S}}
\end{equation}
are much more generic than each of the three states.
There are infinitely many of them compared to only three.

One may object that this is impossible because it is excluded by our observations. We never observe a superposition of the world at different times as in~\eqref{eq:superposition-times}.
Yesterday is already gone and tomorrow did not arrive yet, so how could such a state be possible?
But even if we observe that the world is in the state $\ket{\psi_{F}}$, this does not show that it cannot be in a superposition of the form~\eqref{eq:superposition-times}, with $\beta\neq 0$.
An observer that is part of $\ket{\psi_{F}}$ would only see the world consistent with being in the state $\ket{\psi_{F}}$, and not $\ket{\psi_{T}}$ or $\ket{\psi_{S}}$, even if these may be equally real, containing a copy of the observer thinking that the actual state of the world is in fact $\ket{\psi_{T}}$ or $\ket{\psi_{S}}$.
All subsystems of the state $\ket{\psi_{F}}$ have mutually conditioned relative states, and they are oblivious to the mutually conditioned relative states of the subsystems of $\ket{\psi_{T}}$ and $\ket{\psi_{S}}$.

%------------------------------------------------------------%
\subsection{Schr\"odinger's cat and the primordial soup}
\label{s:schrodinger-cats}

I hope that at least Everettians accept the physical possibility of states like~\eqref{eq:superposition-times}, but if this still seems too implausible, despite their unavoidability, let us revisit the Schr\"odinger cat experiment~\cite{schrodinger1935SchrodingerCat}.
To avoid paperwork, there will be a box but no cat. There will be a radioactive substance and the Geiger counter that triggers a mechanism that smashes with a hammer a flask, but instead of hydrocyanic acid we will use a colored harmless gas.
As the usual story goes, after a time $T$ the box will be in a superposition
\begin{equation}
\label{eq:schrodinger-box}
e^{-\frac{\Gamma T}{2}}\ket{\text{no gas released}}+\sqrt{1-e^{-\Gamma T}}\ket{\text{gas released}}
\end{equation}

However, the released gas fills the box gradually, not instantaneously, so there is a continuum of possible states in which the gas was released, depending on the time when this happened. At time $T$, there is in fact a  superposition of many states in which the gas was released at different times between $0$ and $T$.
So there will be a state of the box in which the gas was not released yet, one in which it was just released at $t=T$, one in which it was released a microsecond before and so on, up to a state in which the gas was released at $t=0$.

But this is not so different from the box being in a superposition of its states at different times, of different stages of the gas filling the box.

One may imagine similar machineries at a cosmic scale, for example a universe in which the primordial false vacuum could decay into an inflationary universe at different times, resulting in a superposition of universes that started to expand at different times.

But such contraptions are needed only for our imaginations, we do not need to invoke them as a mechanism that leads the world in a superposition of different times.
All we needed this thought experiment for was to accommodate ourselves with the possibility.
Nothing forbids a universe that started in a state that then evolved into~\eqref{eq:superposition-times}, such as $e^{\frac{i}{\hbar}\obs{H} T_{u}}\(\alpha\ket{\psi_{T}}+\beta\ket{\psi_{F}}+\gamma\ket{\psi_{S}}\)$, where $T_{u}$ is the age of the universe, or even the more general $\sum_j\alpha_j e^{\frac{i}{\hbar}\obs{H} (T_{u}-\theta_j)}\ket{\psi_{F}}$.
Not only this, but it seems more difficult to constrain somehow the universe to \emph{not} be in a superposition of different intrinsic times.
Quite the opposite, forcing an initial condition so that today the state is $\ket{\psi_{F}}$ requires a very strong fine tuning of the initial conditions, and this will work only for an instant, since the state $\ket{\psi_{F}}$ will immediately reach the states from yesterday, today, tomorrow, and any other day in the past and future history.

%------------------------------------------------------------%
\subsection{Relative states distinguished by time records}
\label{s:decoherence-time}

Applying the same logic as in Everett's relative states, the reason why we do not see the world in a superposition between yesterday and today and tomorrow is that there is a copy of us within the state $\ket{\psi_{T}}$, who thinks it is Thursday, and also the calendars, coworkers, family, friends, newspapers, all confirm it is Thursday, and similarly for the states $\ket{\psi_{F}}$ and $\ket{\psi_{S}}$.
It is similar to the social media bubbles where everyone agrees on certain topics, but as soon as you talk with people from other bubbles, they seem to live in a completely different reality.
Except that in this case, you cannot leave your bubble to meet someone, including yourself or your grandfather, who lives in a different intrinsic time. Not because this would be tantamount to time travel, neither due to decoherence, which prevents jumps from an Everettian branch of the wavefunction to another branch, but because different times are distinguished by different records, and the clocks agree at the level of each branch.
Each of the states in the superposition is a relative state consisting of subsystems corroborating the same information about the time.
The Unruh-Wald theorem implies that the wavefunction is a superposition of states carrying both earlier and later intrinsic records. It seems to jump back and forth in time, but the moments of time are defined by the pointer states, the records, and therefore these do not jump.
These records allows different time branches be decohered and exhibit a degree of stability that allows us to trust them as recording past events.

There is a useful way to understand Everett's interpretation that may help in superposition of different intrinsic times.
When we do a quantum measurement, we never see the actual state of the microscopic system we measure, all we can is to read the state of the pointer, which is macroscopic.
Pointer states can be ``copied'': they can be redundantly and robustly correlated with other degrees of freedom of the environment. For this reason we can read them without disturbing them as when we do quantum measurements of microscopic properties, because we can access macroscopic observables with our senses.

If the pointer is a needle that moves to right or left, depending on the result, we observe its position. The position of the pointer is a coarse graining, an aggregate function of the observables of the particles constituting the pointer. As an observable, it applies to the whole pointer, not to the individual particles, and it is redundantly encoded in the state of the pointer, because we can read it even if we see it only partially.
Distinguishable pointer states correspond to orthogonal sectors of the Hilbert space, and thus they are decohered.

Our phenomenal world is constituted of such pointer observables.
But pointer observables are quantum observables too, the difference is that they apply to system composed of a larger number of particles, and are relatively much more stable.

These macroscopic observables correlate with each other, for example, if Alice does a spin measurement and obtains that the spin is up, she can communicate this to Bob, and Bob can spread the word in the entire University. In the parallel branch, where Alice obtained that the spin is down, the word will spread that the spin is down. Moreover, there will be records in the machine that show what result was obtained, and all of these can be used to tell apart the two branches of the wavefunction.

This is the old Everettian way to distinguish among branches, there is nothing special about the records indicating the current clock time compared to other pointer observables. It is not even strictly Everettian, as it is employed by all approaches to quantum measurements and macroscopic classicality, since they all rely eventually on decoherence, but the Everettian language comes handy here.

The only difference between clock states and generic pointer states is that we are so used to think that the Schr\"odinger parameter $t$ must be the actual time, and our clocks should just try to emulate it as closely as possible.

One may object that there is no time observable in quantum theory, time is a parameter, while pointer observables are dynamical variables. But note that the time parameter $t_{F}$ appearing in $\ket{\psi(t_{F})}$ from equation~\eqref{eq:superposition-times} is not the same as the pointer observables that the observer consults to know the time. Rather, the three states $\ket{\psi_{T}}$, $\ket{\psi_{F}}$, and $\ket{\psi_{S}}$ are eigenstates of $\obs{T}$ for different eigenvalues $\tau_{T}$, $\tau_{F}$, and $\tau_{S}$.
The clock states are just like all other macroscopic or pointer observables that distinguish between macroscopically distinct states.

%------------------------------------------------------------%
\subsection{Schr\"odinger's time parameter \vs intrinsic time}
\label{s:schrodinger-time-vs-intrinsic-time}

Implication~\ref{implication:UW} of Lemma~\ref{lemma:HegerfeldtRuijsenaars} says that, from the perspective of Schr\"odinger's time parameter, the intrinsic time is smeared along the entire Schr\"odinger time axis.
On the other hand, the intrinsic time is based on conventions, on choosing some macroscopic observables as a measure of change. But one can properly translate between different conventions, so it makes sense to consider intrinsic time as objective. And from the perspective of the intrinsic time, the Schr\"odinger time is not properly defined. One may hope that they are strongly correlated, even if they are not monotonically correlated in the sense considered by Unruh and Wald~\cite{UnruhWald1989TimeAndTheInterpretationOfCanonicalQuantumGravity}. But Implication~\ref{implication:UW} already showed that this is not the case: any Schr\"odinger time interval, no matter how small, reaches all intrinsic times.

%------------------------------------------------------------%
\section{The intrinsic time wavefunction}
\label{s:intrinsic-time-wavefunction}

%------------------------------------------------------------%
\subsection{Macrostates and observable changes}
\label{s:macrostates-changes}

Macroscopic observables are in general associated to composite systems, and can be expressed as coarse grainings, which are aggregate functions of the observables of their constituents. Ideally, they commute with one another. I assume that we can choose a set of commuting macroscopic observables $\widehat{\M}=(\obs{M}_1,\obs{M}_2,\ldots)$ so that all collective macroscopic observables can be obtained as functions of $\widehat{\M}$.

Let $\mc{A}(\widehat{\M})$ denote the resulting algebra of macroscopic observables.
Whatever observables we interpret as monitoring change, in particular our time observable $\obs{T}$, are among the pointer observables from the algebra $\mc{A}(\widehat{\M})$.

An additional set of observables is needed to obtain a complete set of commuting observables, and we will denote it by $\widehat{\q}=(\widehat{q}_1,\widehat{q}_2,\ldots)$.
Then, we can express a vector $\kett{\Psi}$ in a common eigenbasis of $\obs{T}$, $\widehat{\M}$, and $\widehat{\q}$,
\begin{equation}
\label{eq:tau-wavefunction}
\Psi(\tau,\M,\q)=\brakett{\tau,\M,\q}{\Psi}.
\end{equation}

Note that, since $\obs{T}\in\mc{A}(\widehat{\M})$, the configuration space of $\Psi(\tau,\M,\q)$ is a bit redundant. 
We can avoid this by choosing $\widehat{\M}$ to contain only macroscopic observables that are independent of the clock observable $\obs{T}$, but truth is that the clock observables are not independent from the other macroscopic observables. For example, the hands of a clock have definite positions, and we interpret these positions as indicating time.
That is, $\tau=\tau(\M)$, so $\Psi(\tau,\M,\q)=\Psi(\tau(\M),\M,\q)$.
But for now, we can ignore this redundancy even if $\tau$ may have an implicit value or relative value for each macrostate, and use equation~\eqref{eq:tau-wavefunction} to make $\tau$ explicit.

%------------------------------------------------------------%
\subsection{From Schr\"odinger to Wheeler-DeWitt}
\label{s:schrod-t-WdW}

In equation~\eqref{eq:tau-wavefunction}, I denoted the total vector by $\kett{\Psi}$ rather than $\ket{\Psi}$ because it plays a similar role to the stationary solution of the Wheeler-DeWitt-type equation, as we shall see.

On the other hand, the wavefunction from the Schr\"odinger equation is of the form
\begin{equation}
\label{eq:tau-t-wavefunction}
\psi(\tau,\M,\q;t)=\braket{\tau,\M,\q}{\psi(t)},
\end{equation}
depending on the Schr\"odinger parameter $t$.
The Schr\"odinger time evolution equation is
\begin{equation}
\label{eq:time-evol-tau-t}
i\hbar\pdv{t}\psi(\tau,\M,\q;t)=\matrixel{\tau,\M,\q}{\obs{H}}{\psi(t)}.
\end{equation}

There is no way to access the Schr\"odinger parameter $t$ from within the system, so we need to rely only on the observables $\tau$, $\M$, $\q$, and other observables derived from these and their eventual canonical conjugates. We may think that we can access the Schr\"odinger parameter $t$, because not only Schr\"odinger came up with his equation in the first place, but its predictions were confirmed numerous times for more than a century. But note that we never read directly the Schr\"odinger time, rather, we always read a clock intrinsic to our world, but extrinsic to the observed system. The changes of the clock system correlate with the changes of the observed system \emph{as subsystems of our world}.

Since one can never access the Schr\"odinger parameter $t$, the assumption that it plays the role of time, and especially that there is really a unique state vector that evolves in time, becomes unjustified.
The actual time observables are operators like $\obs{T}$, and therefore intrinsic time is parametrized by its spectrum values $\tau$.
This is consistent with the fact that the configuration space is parametrized only by $(\tau,\M,\q)$.

If one adopts the eternalist position of spacetime as a block universe, from relativity, from the intrinsic point of view, the wavefunction does not change with the Schr\"odinger parameter $t$. The actual, timeless wavefunction is the accumulation of the Schr\"odinger wavefunction for all values of $t$, and the Schr\"odinger equation simply describes how various components of the timeless wavefunction connect with each other.
What follows does not necessarily commit to the eternalist view. Even if one prefers the presentist view of spacetime, according to which only the present state exists, since the Schr\"odinger parameter $t$ does not correspond to the present state of the universe, one can consider the present state as the $\tau=\text{const.}$ section of the accumulated wavefunction.

Let us encode the above rationale in a Principle:
\begin{principle}
\label{pp:timeless-wavefunction}
The Schr\"odinger parameter $t$ is not time. The states $\psi(t)$ equally exist for all values of $t$, composing an accumulated timeless wavefunction $\Psi$.
\end{principle}

The macroscopic observables $\obs{T}$ and $\widehat{\M}$ decompose the Hilbert space in sectors that we will call \emph{macrostates}. Two states from the same macrostate are macroscopically indistinguishable. Two states from distinct macrostates can be distinguished macroscopically, and decompose $\ket{\psi(t)}$ into branches.
We will not discuss the Everettian branches resulting from distinct values of the macroscopic observables $\widehat{\M}$ but equal values of $\tau$, they are the subject of the many-worlds interpretation.
Here we will focus on the cases when the differences are made by an observable $\obs{T}$, decomposing $\ket{\psi(t)}$ into states of distinct intrinsic eigenvalues $\tau$ of $\obs{T}$.
Since the same $\tau$ is reached in any open Schr\"odinger time interval, we will take the group-average over the $t$-translations,
\begin{equation}
\label{eq:tau-wavefunction-from-t}
\Psi(\tau,\M,\q)=\int_{\R}\psi(\tau,\M,\q;t)\dd t.
\end{equation}

The timeless vector $\kett{\Psi}$ from equation~\eqref{eq:tau-wavefunction} is simply
\begin{equation}
\label{eq:tau-state-from-t}
\kett{\Psi}=\int_{\R}\ket{\psi(t)} \dd t.
\end{equation}

This can be done more rigorously using the rigged Hilbert space or the refined algebraic quantization framework, but we will not repeat this construction here as it was done numerous times in the literature for various other applications~\cite{Marolf2002GroupAveragingAndRefinedAlgebraicQuantizationWhereAreWeNow,HohnSmithLock2021TrinityOfRelationalQuantumDynamics}.

We avoid normalizing the vector $\kett{\Psi}$ unless its norm is finite, which is not true in general, since the amplitude of $\kett{\Psi}$ spreads across a possibly infinite range of intrinsic times $\tau$ and group-averaging is in general distributional, necessitating the rigged Hilbert space.
This situation resembles that of the Page-Wootters proposal~\cite{Page1986DensityMatrixOfTheUniverse}, and like in that case, it allows us to define probabilities for the other observables $\widehat{\M}$ and $\widehat{\q}$ conditioned by the intrinsic time $\tau$.

What Hamiltonians and choices of the intrinsic time observable yield good results for the so far formal definition of $\kett{\Psi}$ and $\Psi(\tau,\M,\q)$ from equations~\eqref{eq:tau-wavefunction-from-t} and~\eqref{eq:tau-state-from-t}, allowing them to converge?

For a generalized energy eigenvector $\ket{E}$, the integral $\int_{\R} e^{-\frac{i}{\hbar}\obs{H}t}\ket{E} \dd t=\int_{\R} e^{-\frac{i}{\hbar}E t}\ket{E} \dd t=2\pi\hbar\delta(E)$, so it vanishes unless $E=0$. More precisely, the group average is proportional to the operator-valued distribution delta of the Hamiltonian.
Then, basically, the integral $\int_{\R}\ket{\psi(t)} \dd t$ filters out all energy components except for the zero-energy component of $\ket{\psi(0)}$.

Therefore, to get $\kett{\Psi}\neq 0$ in equation~\eqref{eq:tau-state-from-t}, it is necessary that $0$ is in the spectrum of $\obs{H}$ and the zero-energy component of $\ket{\psi(0)}$ is not zero.
Then, we proved that
\begin{proposition}
\label{thm:WDW}
The Wheeler-DeWitt constraint equation holds for $\kett{\Psi}$ from equation~\eqref{eq:tau-state-from-t},
\begin{equation}
\label{eq:WDW}
\obs{H}\kett{\Psi}=0.
\end{equation}
\end{proposition}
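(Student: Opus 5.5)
The plan is to show directly that the group average $\kett{\Psi}=\int_{\R}e^{-\frac{i}{\hbar}\obs{H}t}\ket{\psi(0)}\dd t$ is invariant under every Schr\"odinger translation. Equation~\eqref{eq:WDW} should then follow by differentiating at $t=0$.

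First I will compute the effect of $e^{-\frac{i}{\hbar}\obs{H}s}$ on $\kett{\Psi}$ for fixed $s\in\R$. Applying the group property under the integral gives $\int_{\R}e^{-\frac{i}{\hbar}\obs{H}(t+s)}\ket{\psi(0)}\dd t$. The change of variables $t\mapsto t-s$ uses the translation invariance of Lebesgue measure on $\R$, and turns this back into $\kett{\Psi}$. So $e^{-\frac{i}{\hbar}\obs{H}s}\kett{\Psi}=\kett{\Psi}$ for all $s$. Differentiating at $s=0$ gives $-\frac{i}{\hbar}\obs{H}\kett{\Psi}=0$, which is the claim.

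As a cross-check I will redo this in the spectral representation, following the computation just before the proposition. Write $\ket{\psi(0)}=\int \phi(E)\ket{E}\dd\mu(E)$. Then $\kett{\Psi}=2\pi\hbar\int\phi(E)\delta(E)\ket{E}\dd\mu(E)$, which is supported on the zero-energy sector. Since $E\,\delta(E)=0$, multiplying by $\obs{H}$, which acts as multiplication by $E$, gives zero.

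The main obstacle is rigor. The integral generally does not converge in $\hilbert$, so neither the interchange of $\obs{H}$ (or of the unitaries) with the integral nor the differentiation is justified in the Hilbert space norm. I would handle this the way the paper indicates, through the rigged Hilbert space / refined algebraic quantization framework. Take a dense domain $\Phi\subset\hilbert$ of test vectors invariant under $\obs{H}$, for instance vectors whose spectral densities are smooth and of rapid decay. Define $\kett{\Psi}$ as the antilinear functional
\begin{equation*}
\chi\mapsto\int_{\R}\braket{\chi}{\psi(t)}\dd t
=\int_{\R}\bra{\chi}e^{-\frac{i}{\hbar}\obs{H}t}\ket{\psi(0)}\dd t ,
\end{equation*}
taking $\ket{\psi(0)}\in\Phi$ as well. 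This integral converges, because the integrand is the Fourier transform of the smooth, rapidly decaying spectral product density $\overline{\chi(E)}\phi(E)$, and so it decays rapidly in $t$.

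Within this framework, $\obs{H}\kett{\Psi}$ is defined by duality, $(\obs{H}\kett{\Psi})(\chi):=\kett{\Psi}(\obs{H}\chi)$. This pairing equals $\int\bra{\chi}\obs{H}e^{-\frac{i}{\hbar}\obs{H}t}\ket{\psi(0)}\dd t=i\hbar\int\frac{\dd}{\dd t}\bra{\chi}e^{-\frac{i}{\hbar}\obs{H}t}\ket{\psi(0)}\dd t$. The last expression is the integral of a total derivative of a function that vanishes at $\pm\infty$, so it is zero. Hence $\obs{H}\kett{\Psi}=0$ as a distributional identity, which is the precise meaning of equation~\eqref{eq:WDW}.

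Nontriviality of $\kett{\Psi}$ is a separate matter. As already remarked, it requires $0$ to lie in the spectrum of $\obs{H}$ and the zero-energy component of $\ket{\psi(0)}$ to be nonzero. It is not needed for the constraint itself.
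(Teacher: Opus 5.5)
Your proof is correct, but your main line differs from the paper's. The paper argues only spectrally: since $\int_{\R}e^{-\frac{i}{\hbar}Et}\dd t=2\pi\hbar\,\delta(E)$, the group average equals $2\pi\hbar\,\delta(\obs{H})\ket{\psi(0)}$. It therefore keeps only the zero-energy component of $\ket{\psi(0)}$, which $\obs{H}$ annihilates; this is exactly your ``cross-check''.

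Your primary argument uses nothing but translation invariance of Lebesgue measure. It needs no spectral decomposition and gives the stronger statement $e^{-\frac{i}{\hbar}\obs{H}s}\kett{\Psi}=\kett{\Psi}$ for all $s$. It also leads directly to your distributional version, where the constraint becomes the vanishing of the integral of a total derivative. That gives $\obs{H}\kett{\Psi}=0$ the precise meaning the paper only delegates to the refined-algebraic-quantization literature.

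In exchange, the paper's route identifies explicitly what survives the averaging. That is what supplies the nontriviality criterion, and it explains why the shift $\obs{H}\to\obs{H}-E_0\obs{I}$ changes $\kett{\Psi}$. Your invariance argument is silent on both, as you acknowledge.

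One caveat concerns the scope of your rigorous version. The rapid decay in $t$ holds only if the spectral measures of $\chi$ and $\ket{\psi(0)}$ are absolutely continuous, with densities smooth on all of $\R$.
\begin{itemize}
\item Any point-spectrum component gives a non-decaying integrand. A zero-energy eigenvector makes the integral diverge and calls for a renormalized average such as $\lim_{T\to\infty}\frac{1}{2T}\int_{-T}^{T}$, which yields the honest projection onto $\ker\obs{H}$.
\item Because $\obs{H}\geq\mathfrak{c}$, smoothness on $\R$ forces the densities to vanish to all orders at $\inf\sigma(\obs{H})$. So if $0=\inf\sigma(\obs{H})$, your $\kett{\Psi}$ is identically zero.
\end{itemize}
None of this affects the constraint itself, only how much your construction actually covers.
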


Note that we obtained it without assuming that $\obs{H}$ includes the gravitational interactions, unlike the standard situations where the Wheeler-DeWitt equation is obtained~\cite{Dewitt1967QuantumTheoryOfGravityI_TheCanonicalTheory,Dewitt1967QuantumTheoryOfGravityII_TheManifestlyCovariantTheory}.

Equation~\eqref{eq:WDW} says that $\kett{\Psi}$ is a stationary state, a zero-energy (generalized) eigenvector. Therefore, a minimal necessary condition for the integral from equation~\eqref{eq:tau-state-from-t} is that $0$ is contained in the relevant spectral support in $\sigma\(\obs{H}\)$, the spectrum of $\obs{H}$.
If this is not the case, the regularized integral converges to $\kett{\Psi}=0$, and both these situations are incompatible with the mechanism proposed here.

If $0\notin\sigma\(\obs{H}\)$, we can choose an eigenvalue $E_0\in\sigma\(\obs{H}\)$ and shift $\obs{H}$, replacing it by $\obs{H}-E_0\obs{I}$.
Normally, this does not change the physics described by the ordinary Schr\"odinger equation, since it amounts to multiplying the state vector $\ket{\psi(t)}$ by a global phase factor. However, this global phase makes a difference for the cumulative state vector $\kett{\Psi}$ defined by the integral~\eqref{eq:tau-state-from-t}, and for the $\tau$-wavefunction $\Psi(\tau,\M,\q)$ defined in~\eqref{eq:tau-wavefunction-from-t}, so we have to use it with caution. If we take the $\tau$-physics as the real internal perspective of the observers embedded in the world, the difference is important, even if in the $t$-physics there is no difference.

%------------------------------------------------------------%
\subsection{Intrinsic time observables}
\label{s:intrinsic-time-observable}

For a good intrinsic time observable $\obs{T}$, we assume a uniform spectral multiplicity, so that the states at different intrinsic times live in equivalent Hilbert spaces.

It is also possible that the intrinsic time observable $\obs{T}$ is chosen to give a uniform time. If its spectrum is discrete, we can take, for all $\tau_1,\tau_2$,
\begin{equation}
\label{eq:tau-uniform-discrete}
\evtt{\obs{P}_{\tau_1}}{\Psi}=\evtt{\obs{P}_{\tau_2}}{\Psi}=1.
\end{equation}

If the spectrum of $\obs{T}$ is continuous, for a Borel interval $\mathcal{T}\subseteq\R$ in the spectrum of $\obs{T}$, we denote by $\obs{P}_{\mathcal{T}}$ the projection operator on the corresponding subspace $\hilbert_{\mathcal{T}}<\hilbert$ of $\obs{T}$.
Then, we can require
\begin{equation}
\label{eq:tau-uniform-continuum}
\evtt{\obs{P}_{\mathcal{T}_1}}{\Psi}=\evtt{\obs{P}_{\mathcal{T}_2}}{\Psi}<\infty
\end{equation}
for all finite length Borel intervals $\mathcal{T}_1$ and $\mathcal{T}_2$ related by a spectral translation.
We can impose the condition

\begin{equation}
\label{eq:tau-basis}
\braket{\psi(\tau)}{\psi(\tau')}_c=\delta(\tau-\tau')
\end{equation}
 for all $\tau,\tau'$, where, in the rigged Hilbert space,
\begin{equation}
\label{eq:psi-tau}
\ket{\psi(\tau)}:=\lim_{\Delta\tau\to 0}\frac{\obs{P}_{[\tau,\tau+\Delta\tau]}\kett{\Psi}}{\sqrt{\evtt{\obs{P}_{[\tau,\tau+\Delta\tau]}}{\Psi}}}.
\end{equation}

The condition~\eqref{eq:tau-basis} ensures uniformity with respect to $\tau$. Non-uniform reparametrizations of the spectrum preserve orthogonality, but break the condition~\eqref{eq:tau-basis} due to the scaling property of the Dirac distribution.

\begin{definition}
\label{def:tau-uniform}
An intrinsic time observable $\obs{T}$ is called \emph{$\tau$-uniform} if it satisfies condition~\eqref{eq:tau-uniform-discrete} or~\eqref{eq:tau-basis}.
\end{definition}

We will see that the $\tau$-uniformity condition ensures that the system follows an intrinsic effective Schr\"odinger-like equation with respect to $\tau$, but I will not consider it mandatory.

This proposal achieves what the Page-Wootters proposal achieves, but without introducing a clock external to the world or separating a clock-subsystem out of the world, and whose Hamiltonian $\obs{H}_c$ is unbounded, which would also make the total Hamiltonian $\obs{H}=\obs{H}_c+\obs{H}_r$ unbounded.
By contrast, the present proposal is based on identifying the time observable based on the records, which is the logical way~\cite{Stoica2026PageWoottersRelationalQuantumTimeIsReversed}.

In the proposal from this article, there is no external clock, the time observable being intrinsic and collective to the system, and compatible with a Hamiltonian that is bounded from below.

%------------------------------------------------------------%
\subsection{The intrinsic Schr\"odinger equation}
\label{s:intrinsic-schrod}

For an intrinsic $\tau$-uniform time observable, for any $\ket{\psi}\in\hilbert$, the vector $\obs{H}\ket{\psi}=\int_{\R} c(\tau)\ket{\psi(\tau)}\dd\tau$ identifies a minimal closed subspace $\hilbert_{\ket{\psi}}$ of $\hilbert$ spanned by generalized eigenvectors $\ket{\psi(\tau)}$ of $\obs{T}$. Due to Implication~\ref{implication:UW}, this subspace contains a generalized eigenvector of $\obs{T}$ for each $\tau\in\R$.
On $\hilbert_{\ket{\psi}}$, we define the translation operator
\begin{equation}
\label{eq:clock-H-psi}
\widetilde{\obs{H}}_{\ket{\psi}}=-i\hbar\pdv{\tau}.
\end{equation}
This extends to the full Hilbert subspace to a self-adjoint operator generating translations on the $\tau$-basis of $\widetilde{\hilbert}$,
\begin{equation}
\label{eq:clock-H}
\widetilde{\obs{H}}=-i\hbar\pdv{\tau}
\end{equation}
so that
\begin{equation}
\label{eq:clock-translation}
\ket{\psi(\tau')}=e^{-\frac{i}{\hbar}\widetilde{\obs{H}}(\tau'-\tau)}\ket{\psi(\tau)},
\end{equation}
so the following effective Schr\"odinger equation holds
\begin{equation}
\label{eq:tau-schrod}
\begin{aligned}
i\hbar\dv{\tau}\ket{\psi(\tau)}=\widetilde{\obs{H}}\ket{\psi(\tau)}.
\end{aligned}
\end{equation}

This intrinsic Schr\"odinger equation has Schr\"odinger time $\tau$, and it describes the $\tau$-evolution as seen from the intrinsic perspective.
The intrinsic Hamiltonian $\widetilde{\hilbert}$ is unbounded, since its spectrum is $\R$.

We notice that
\begin{equation}
\begin{aligned}
[\obs{T},\widetilde{\obs{H}}]\ket{\psi(\tau)}
=&\obs{T}\widetilde{\obs{H}}\ket{\psi(\tau)}-\widetilde{\obs{H}}\(\tau\ket{\psi(\tau)}\) \\
=&-i\hbar\obs{T}\pdv{\tau}\ket{\psi(\tau)}+i\hbar\ket{\psi(\tau)} \\
&+i\hbar\obs{T}\pdv{\tau}\ket{\psi(\tau)} \\
=&i\hbar\ket{\psi(\tau)}, \\
\end{aligned}
\end{equation}
so $\obs{T}$ and $\widetilde{\obs{H}}$ satisfy the \emph{canonical commutation relation} on the suitable domain,
\begin{equation}
\label{eq:clock-time-operator}
[\obs{T},\widetilde{\obs{H}}]=i\hbar.
\end{equation}

The evolution is unitary, due to the condition that the time observable is uniform~\eqref{eq:tau-uniform-continuum}.
The intrinsic history of the system,
\begin{equation}
\label{eq:tau-history}
\tau\mapsto\Psi(\tau,\M,\q),
\end{equation}
can be expressed as a solution of a Schr\"odinger equation like~\eqref{eq:clock-H} in which the Hamiltonian $\widetilde{\obs{H}}$ is expressed as $\widetilde{\obs{H}}(\M,\q)$, in terms of the configuration space variables $\M$ and $\q$,
\begin{equation}
\label{eq:tau-hamiltonian-obs}
\begin{aligned}
\widetilde{\obs{H}}(\M,\q)\ket{\psi(\tau)}
:=&\matrixel{\tau,\M,\q}{\widetilde{\obs{H}}}{\psi(\tau)} \\
=&-i\hbar\matrixel{\tau,\M,\q}{\pdv{\tau}}{\psi(\tau)} \\
\end{aligned}
\end{equation}

Then, any intrinsic $\tau$-history of the system as in~\eqref{eq:tau-history} can be described as a solution to a Schr\"odinger equation whose time parameter is $\tau$. Its Hamiltonian $\widetilde{\obs{H}}(\M,\q)$ is, in general, different from the original Hamiltonian $\obs{H}(\M,\q)$ from equation~\eqref{eq:time-evol}, for example its spectrum is unbounded.

However, if we are serious about getting rid of unobservable quantities, we should also acknowledge that we have no way to access the amplitude at $\tau$, because all we can access are the intrinsic records. Thus, even the ``intrinsic Schr\"odinger equation'' is a misnomer, since we would not be able to distinguish it for similar equations obtained for non-uniform time observables.
The ``true'' intrinsic Schr\"odinger equation should also be consistent with the relative changes as registered in the records.

%------------------------------------------------------------%
\section{Conclusions}
\label{s:conclusions}

Since we can distinguish macroscopic states at different times, it follows that even if the state of the world would be a superposition of states at different times, we would not know it.
In fact, states with definite time are extremely nongeneric, and for any definite measure, they have vanishing probability.
Moreover, if for some value of the Schr\"odinger parameter $t$, the state of the world has a definite intrinsic time, if the Hamiltonian is bounded from below, the state would immediately reach states at all intrinsic times from the past or from the future.
Therefore, the problem is unavoidable and needs attention.

We have seen that accepting that the intrinsic time that we measure relies on macroscopic pointer states, and that observers never observe superpositions of distinct macroscopic pointer states, allows us to define good time observables, even though they do not correlate monotonically with the Schr\"odinger parameter $t$.
It then becomes justified to demote the Schr\"odinger parameter $t$ from the role of time and accept instead intrinsic time observables.
This leads to the Wheeler-DeWitt equation, even for quantum theories that do not explicitly include gravity.

We have also seen that, if the time observable $\obs{T}$ is such that the state's amplitude is spread uniformly along $\tau$, an effective intrinsic Schr\"odinger equation emerges.

An open question is how do we recover the Schr\"odinger equation from the intrinsic perspective. I expect that we can infer the Hamiltonian of a subsystem from the intrinsic perspective of its environment, rather than the universal Schr\"odinger equation, which cannot be reached from within its own solutions.

In a follow-up article~\cite{Stoica2026EmergentCosmologyAndGravityFromQuantumTime} I show that, if the uniformity condition is satisfied, it induces a curvature in the time direction, allowing the emergence of Friedmann-Lema\^itre-Robertson-Walker cosmological models, including the $\Lambda$CDM model.
If four independent quantum time observables exist, they can parametrize spacetime and induce spacetime curvature~\cite{Stoica2026EmergentCosmologyAndGravityFromQuantumTime}.

%------------------------------------------------------------%

\end{document}